\documentclass[10pt,journal,twocolumn,twoside]{IEEEtran}
\usepackage{cite}
\usepackage{amsmath,amssymb,amsfonts}
\usepackage{algorithmic}
\usepackage{graphicx}
\usepackage{textcomp}
\def\BibTeX{{\rm B\kern-.05em{\sc i\kern-.025em b}\kern-.08em
		T\kern-.1667em\lower.7ex\hbox{E}\kern-.125emX}}

\usepackage{amsmath,mathtools}
\usepackage{amsthm}
\usepackage{amssymb}
\usepackage{bm}
\usepackage{xspace}
\usepackage{xcolor}
\usepackage{graphicx}
\usepackage{url}
\usepackage{framed}
\usepackage{float}
\usepackage{rotating}
\usepackage{verbatim}
\usepackage{listings}
\usepackage{lscape}
\usepackage[ruled,vlined,linesnumbered]{algorithm2e}
\usepackage[nolist]{acronym}
\usepackage{pgfplots}
\pgfplotsset{compat=1.15}
\usepackage{pgf}
\usepackage{tikz}
\usetikzlibrary{arrows,shapes.misc,chains,scopes}
\usetikzlibrary{decorations.pathreplacing}
\usepackage{cleveref}
\usepackage{pgfplotstable}

\usepackage{colortbl}
\usetikzlibrary{matrix}
\usetikzlibrary{calc}
\usetikzlibrary{fit}

\newcommand{\argmax}[1]{\underset{#1}{\operatorname{arg}\,\operatorname{max}}\;}

\newcommand{\norm}[1]{\lVert#1\rVert}
\newcommand\numeq[1]%
{\stackrel{\scriptscriptstyle(\mkern-1.5mu#1\mkern-1.5mu)}{=}}
\newcommand{\executeiffilenewer}[3]{%
	\ifnum\pdfstrcmp{\pdffilemoddate{#1}}%
	{\pdffilemoddate{#2}}>0%
	{\immediate\write18{#3}}\fi%
}
\newcommand{%
	\executeiffilenewer{.svg}{.pdf}%
	{inkscape -z -D --file=.svg %
		--export-pdf=.pdf --export-latex}%
	\input{.pdf_tex}%
}[1]{%
	\executeiffilenewer{#1.svg}{#1.pdf}%
	{inkscape -z -D --file=#1.svg %
		--export-pdf=#1.pdf --export-latex}%
	\input{#1.pdf_tex}%
}

\graphicspath{{figures/}}

\newcommand{\vech}{\boldsymbol{h}}

\newcommand{\vecs}{\boldsymbol{s}}

\newcommand{\vecx}{\boldsymbol{x}}
\newcommand{\vecy}{\boldsymbol{y}}
\newcommand{\vecz}{\boldsymbol{z}}

\newcommand{\sss}[1]{\scalebox{1.1}{$\scriptscriptstyle \mathsf{#1}$}}
\newcommand{\vecxd}{\bm{x}^{\sss{d}}}
\newcommand{\vecxp}{\bm{x}^{\sss{p}}}

\newcommand{\vecyp}{\bm{y}^{\sss{p}}}
\newcommand{\inner}[1]{\langle#1\rangle}

\newcommand{\bmm}{\begin{matrix}}
	\newcommand{\emm}{\end{matrix}}
\newcommand{\bpm}{\begin{pmatrix}}
	\newcommand{\epm}{\end{pmatrix}}

\newcommand{\bsbm}{\left[\begin{smallmatrix}}
	\newcommand{\esbm}{\end{smallmatrix}\right]}

\newcommand{\bbm}{\begin{bmatrix}}
	\newcommand{\ebm}{\end{bmatrix}}

\usepackage[utf8]{inputenc}
\usepackage[english]{babel}

\usepackage{booktabs}
\usepackage{multirow}
\usepackage{siunitx}
\usepackage{smartdiagram}

\theoremstyle{definition}
\newtheorem{theorem}{Theorem}

\newtheorem{corollary}{Corollary}
\newtheorem{lemma}{Lemma}
\newtheorem{remark}{Remark}

\begin{acronym}
	\acro{AWGN}{additive white Gaussian noise} 
	\acro{bpcu}{bits per channel use}
	\acro{ML}{maximum likelihood} 
	\acro{PM}{path metric} 
	\acro{BPSK}{binary phase-shift keying}
	\acro{CRC}{cyclic redundancy check}
	\acro{QPSK}{quadrature phase-shift keying}
	\acro{FER}{frame error rate}
	\acro{SC}{successive cancellation}
	\acro{SCL}{successive cancellation list}
	\acro{PAT}{pilot-assisted transmission}
	\acro{PCT}{polar-coded transmission}
	\acro{CSI}{channel state information}
	\acro{SNR}{signal-to-noise ratio}
	\acro{LLR}{log-likelihood ratio}
	\acro{LDPC}{low-density parity-check}
	\acro{MoM}{method of moments}
	\acro{QUP}{quasi-uniform puncturing}
	\acro{i.i.d.}{independent and identically distributed}
	\acro{r.v.}{random variable}
	\acro{SISO}{single-input single-output}
\end{acronym}

\newcommand{\own}{\textcolor{black}}
\newcommand{\ed}{\textcolor{black}}

\begin{document}
	\title{Polar-Coded Non-Coherent Communication}
	
	\author{\IEEEauthorblockN{Peihong Yuan, \IEEEmembership{Student Member, IEEE}, Mustafa Cemil Co\c{s}kun,                    \IEEEmembership{Student Member, IEEE},\\ Gerhard Kramer, \IEEEmembership{Fellow, IEEE}}
		\thanks{This work was supported by the German Research Foundation (DFG) under Grant KR~3517/9-1, and by the Helmholtz Gemeinschaft through the HGF-Allianz DLR@Uni project Munich Aerospace under the grant ``Efficient Coding and Modulation for Satellite Links with Severe Delay Constraints''.}
		\thanks{The authors are with the Institute for Communications Engineering of the Technical University of Munich (TUM), Theresienstr. 90, 80333 Munich, Germany (email: \{peihong.yuan,mustafa.coskun,gerhard.kramer\}@tum.de).}
	}
	\maketitle
	\begin{abstract}
		A polar-coded transmission (PCT) scheme with joint channel estimation and decoding is proposed for channels with unknown channel state information (CSI). The CSI is estimated via successive cancellation (SC) decoding and the constraints imposed by the frozen bits. SC list decoding with an outer code improves performance, \ed{including resolving a phase ambiguity when using quadrature phase-shift keying (QPSK) and Gray labeling.} Simulations with 5G polar codes and QPSK show gains of up to $2$~dB at a frame error rate (FER) of $10^{-4}$ over pilot-assisted transmission for various non-coherent models. Moreover, PCT performs within a few tenths of a dB to a coherent receiver with perfect CSI. For Rayleigh block-fading channels, PCT outperforms an FER upper bound based on random coding and within one dB of a lower bound.
	\end{abstract}
	\begin{IEEEkeywords}
		polar codes, fading channel, blind estimation, non-coherent communication, pilot-assisted transmission
	\end{IEEEkeywords}
	
	\section{Introduction}
	\IEEEPARstart{T}{he} communication setting where \ac{CSI} is not available at the transmitter or receiver is known as \emph{non-coherent} communication\ed{\cite[Ch.~10.7]{biglieri_wireless05}}. A common approach to address the lack of CSI is to embed pilot symbols in the transmitted symbol string, have the receiver estimate the \ac{CSI} based on the pilots, and use the estimated \ac{CSI} to decode. This approach is called \ac{PAT}\cite{Tong04:PAT} with mismatched decoding \cite[Ex.~5.22]{gallager1968information},\cite{Merhav94:MM,Lapidoth98:MM,TB05,TC07,Scarlett14:MM}.
	
	\ac{PAT} has two disadvantages for short block lengths: mismatched decoding reduces reliability and pilot symbols reduce rate significantly at low to moderate \ac{SNR} \cite{TB05,TC07,ODS19,Durisi16:Short,Liva17:Mismatched}. Both problems can be partially mitigated with sophisticated signal processing. For instance, one may use iterative channel estimation and decoding \cite{meyr1997digital,henkbook,herzet2007code,Noels03,Dauwels04,Herzet07:iterative,KhaBout:EM-APP06}, or two-stage algorithms that consider pilot symbols as part of the codebook\cite{CLO19,XCL19}, or even \ac{ML} decoding. Nevertheless, there is a fundamental performance degradation due to using pilot symbols\cite{ODS19}.
	
	We propose a pilot-free two-stage \ac{PCT} scheme to jointly estimate the \ac{CSI} and data with an adjustable complexity that can be made comparable to \ac{PAT}. In the first stage, \ac{SCL} decoding and the polar code constraints are used to estimate the \ac{CSI}. In the second stage, mismatched \ac{SCL} decoding proceeds with with this estimate. Gains of up to $2$ dB are shown at a \ac{FER} of $10^{-4}$ as compared to classic \ac{PAT} schemes for several non-coherent settings.
	
	A related method to estimate \ac{CSI} uses the parity-check constraints of a \ac{LDPC} code\cite{Imad2009,Gallager63:LDPC}. However, \ac{SCL} decoding of polar codes naturally provides soft estimates of frozen bits. Moreover, polar codes are usually used with a high-rate outer code~\cite{tal2015list,qualcomm_codes} that can resolve \ac{CSI} ambiguities, e.g., the phase ambiguity \ed{when using \ac{QPSK} and Gray labeling}\cite{Imad2009}. Of course, one may consider outer codes for \ac{LDPC} codes as well. Other low-complexity methods for non-coherent channels are described in, e.g.,\cite{WM02:noncoherent,CKM03:noncoherent,CT07,Imad2009,MLP13}. We remark that our focus is on \ac{QPSK} but the ideas extend to higher-order modulations. One may also combine \ac{PAT} and \ac{PCT} to optimize performance.
	
	This paper is organized as follows. Sec.~\ref{sec:preliminaries} introduces notation, the system model, polar codes, and \ac{PAT}. Sec.~\ref{sec:main} describes our joint channel estimation and decoding algorithm. Sec.~\ref{sec:numerical} demonstrates the effectiveness of the method for short polar codes concatenated with an outer \ac{CRC} code and \ac{QPSK}. Sec.~\ref{sec:conclusions} concludes the paper.

	\section{Preliminaries}
	\label{sec:preliminaries}
	Uppercase letters, e.g., $X$, denote random variables and lowercase letters, e.g., $x$, denote their realizations. The probability distribution of $X$ evaluated at $x$ is written as $P_X(x)$ or $P(x)$ when the argument is the lower-case version of the random variable. We similarly treat densities $p_X(x)$ or $p(x)$. For $a\le b$ we write $x_a^b$ for the row vector $(x_a,\dots,x_b)$. Lower case bold letters, e.g., $ \vecx $, also denote row vectors. Capital bold letters, e.g., $\bm{X}$, denote random vectors. All-zeros and all-ones vectors are denoted as $\boldsymbol{0}$ and $\boldsymbol{1}$, respectively. The notation $\overline{x_a^b}$ refers to the element-wise bit-flipped version of a binary vector $x_a^b$. We write $[N]=\left\{1,\dots,N\right\}$ and use calligraphic letters, e.g., $\mathcal{S}$, for sets otherwise. A subvector $x_\mathcal{S}$ of $x_1^N$ is formed by appropriately ordered elements with indices in $\mathcal{S}$. The cardinality of $\mathcal{S}$ is denoted as $|\mathcal{S}|$. We write $\norm{\cdot}$ for the $l_2$-norm and $\inner{\cdot,\cdot}$ for the inner product of two vectors.
	Finally, $\mathbb{F}^{\otimes m}$ refers to the $m$-fold Kronecker product of a matrix $\mathbb{F}$ where $\mathbb{F}^{\otimes 0}=1$.
	
	\subsection{System Model}
	
	Consider a scalar block-fading channel, i.e., the fading coefficient $H$ is constant for $n_c$ channel uses and changes independently across $B$ coherence blocks, resulting in a frame size of $n = B n_c$ symbols. The channel output of the $i$th coherence block is
	\begin{equation}\label{eq:model}
	\vecy_i = h_i\vecx_i+\vecz_i, \quad i = 1,\dots,B
	\end{equation}
	where $\vecx_i\in\mathcal{X}^{n_c}$ and $\vecy_i\in\mathbb{C}^{n_c}$  are the transmitted and received vectors, $h_i\in\mathbb{C}$ is a realization of $H$, and $\vecz_i$ is an \ac{AWGN} term whose entries are \ac{i.i.d.} as $\mathcal{CN}(0,2\sigma^2)$. Neither the transmitter nor the receiver knows $h_i$ or even the probability distribution of $H$. \ed{We assume that the noise variance $2\sigma^2$ is known to the receiver; this may be justified by the slow time scale of receiver device variations as compared to fading due to mobility.} A vector without subscripts denotes a concatenation of vectors or scalars, e.g., $\vecy=(\vecy_1,\dots,\vecy_B)$, $\vecx=(\vecx_1,\dots,\vecx_B)$ and $\vech=(h_1,\dots,h_B)$.
	
	Consider \ac{QPSK} with Gray labeling. \ed{The input alphabet is $\mathcal{X}=\left\{ \pm \Delta \pm j\Delta \right\}$, $\Delta>0$, and we map the binary vector $c_1^{2m}$ to $x_1^m\in\mathcal{X}^m$ via $\chi:\{0,1\}^{2m}\mapsto\mathcal{X}^{m}$ as 
		\begin{equation}\label{eq:label}
		\chi\left(c_1^{2m}\right) = \left(\chi_g(c_1,c_2),\chi_g(c_3,c_4),\dots,\chi_g(c_{2m-1},c_{2m})\right)
		\end{equation}
		where $\chi_g\left(c^2\right) = (-1)^{c_1}\Delta + j(-1)^{c_2}\Delta$. The mapping \eqref{eq:label} is \emph{symmetric}, i.e., if $\chi\left(c_1^{2m}\right) = \vecx$ then $\chi\left(\overline{c_1^{2m}}\right) = -\vecx$.}
	
	\subsection{Polar Codes}
	A binary polar code of block length \own{$N$ and dimension $K$ is defined by a set $\mathcal{A}\subseteq[N]$ of indices with $|\mathcal{A}| = K$ and the matrix $\mathbb{F}^{\otimes \log_2N}$, where $N$ is a positive-integer power of $2$ and $\mathbb{F}$ is the binary Hadamard matrix~\cite{arikan2009channel}}. Encoding is performed as $c_1^N=u_1^N\,\mathbb{F}^{\otimes\log_2N}$, where the \ed{input} vector $u_1^N$ has $K$ uniform information bits $u_\mathcal{A}$ and $N-K$ frozen bits $u_{\mathcal{F}}=\boldsymbol{0}$ with $\mathcal{F}=[N]\setminus\mathcal{A}$.
	A polar code is designed by storing the indices of the most reliable bits under SC decoding in the set $\mathcal{A}$~\cite{stolte2002rekursive,arikan2009channel}. In this work, we use the channel quality independent beta-expansion construction \cite{he2017beta}.
	
	An \ac{SC} decoder estimates the bit $u_i$ at decoding stage $i$ as $\hat{u}_i=0$ if
	$i\in\mathcal{F}$, and otherwise
	\begin{equation*}
	\hat{u}_i = \argmax {u_i\in\{0,1\}}
	p_{ \bm{Y},U_1^{i-1}|U_i}
	\left(\vecy,\hat{u}_1^{i-1}|u_i\right)
	\end{equation*}
	where the probabilities are approximated recursively by assuming that $U_j$, $i<j\leq \own{N}$, are \ac{i.i.d.} uniform random bits\cite{arikan2009channel}.
	Both encoding and SC decoding can be implemented with complexity $\mathcal{O}(N\log_2 N)$ \cite{arikan2009channel}.
	
	\ac{SCL} decoding with list size $L$ runs $L$ instances of an \ac{SC} decoder in parallel~\cite{tal2015list}. Each instance has a different hypothesis on the decoded information bits $\hat{u}_1^{i-1}$ at decoding stage $i$, called a decoding path. After decoding stage $N$, the decoder outputs the hypothesis of the most likely path as the estimate $\hat{u}_1^N$. An \ac{SCL} decoder can be implemented with complexity $\mathcal{O}(LN\log_2 N)$ \cite{tal2015list}. 
	
	Polar codes perform significantly better when combined with an outer \ac{CRC} code\cite{tal2015list}. Decoding proceeds as follows: An \ac{SCL} decoder for the inner polar code produces a list of codewords. The outer decoder discards those not fulfilling the constraints of the outer code. The decoder puts out the most likely of the remaining codewords if there is at least one, and it declares a frame error otherwise. For classic \ac{AWGN} channels, these modified polar codes are competitive under \ac{SCL} decoding for short block lengths\cite{Coskun18:Survey}. 
	
	\subsection{Pilot-Assisted Transmission}
	\label{sec:pragmatic}
	Consider \ac{PAT} as shown in Fig.~\ref{fig:bfc} where the first $n_p$ symbols in each coherence block are pilot symbols $\vecxp_i$ and the remaining \ed{$n_d=n_c-n_p$} symbols $\vecxd_i$ are coded. \ed{To keep the overall rate fixed, the $(N,K)$ code is punctured by using \ac{QUP} \cite{niu2013beyond} so that the code length after puncturing is $N_{\mathrm{punc}} = N - 2B n_p = 2B n_d$ with QPSK.} The pilot and coded symbols have the same energy. Upon observing $\vecy$, an \ac{ML} estimate of the \ac{CSI} is $\hat{h}_i = \inner{\vecyp_i,\vecxp_i}/\norm{\vecxp_i}^2$.
	A mismatched decoder uses $\hat{\vech}=(\hat{h}_1,\dots,\hat{h}_B)$ to compute the bit-wise \acp{LLR} that are fed to the \ac{SCL} decoder, leading to a codeword estimate.
	\begin{figure}    
		\centering
		\vspace{-2mm}
		\includegraphics[width=\columnwidth]{./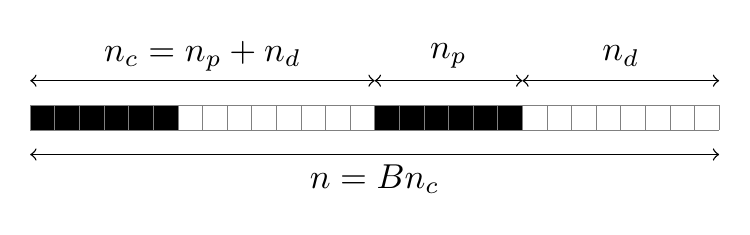}\\[-5.5mm]
		\caption{A \ac{PAT} frame structure with $B=2$ coherence blocks. Dark and white boxes represent pilot and coded symbols, respectively.}
		\label{fig:bfc}
	\end{figure}
	
	\section{Joint Channel Estimation and Decoding}
	\label{sec:main}
	This section presents a low-complexity joint channel estimation and decoding scheme for polar codes. We do not use pilot symbols, i.e., we have $n_p=0$ and $\vecx_i = \vecxd_i$. A random interleaver $\boldsymbol{\Pi}$ permutes the encoded bits $c_1^N$ and is followed by the mapping \eqref{eq:label}. The channel model is \eqref{eq:model}.
	
	Let $h_i=r_i e^{j\theta_i}$ where $r_i\in[0,\infty)$ and $\theta_i\in[0,2\pi)$, $i\in[B]$. We begin by estimating the amplitudes $r_i=|h_i|$ as
	\begin{equation}\label{eq:amplitude_est}
	\hat{r}_i=\left(\sqrt{2}\Delta\right)^{-1}\sqrt{\frac{1}{n_c}\norm{\vecy_i}^2-2\sigma^2}, \quad i = 1,\dots,B.
	\end{equation}
	
	Let $\beta$ be a number of input bits, and let $\mathcal{A}^{(\beta)}=\mathcal{\mathcal{A}}\cap[\beta]$ and $\mathcal{F}^{(\beta)}=\mathcal{F}\cap[\beta]$ be sets of information and frozen indices among the first $\beta$ input bits \ed{$u_1^\beta$}. We use the polar code constraints to estimate the phase as
	\begin{align}
	& \left\{\hat{\theta_1},\dots,\hat{\theta}_B\right\} = \argmax {\{\theta_1,\dots,\theta_B\}} p_{ \bm{Y}|U_{\mathcal{F}^{(\beta)}},\bm{H}}\left(\vecy\left|\boldsymbol{0},\hat{\bm{h}}\right.\right) \nonumber \\
	&\quad = \argmax {\{\theta_1,\dots,\theta_B\}} \sum_{u_{\mathcal{A}^{(\beta)}}} p_{ \bm{Y},U_{\mathcal{A}^{(\beta)}}|U_{\mathcal{F}^{(\beta)}},\bm{H}}\left(\vecy,u_{\mathcal{A}^{(\beta)}}\left|\boldsymbol{0},\hat{\bm{h}}\right.\right)
	\label{eq:proposed}
	\end{align}
	where $\hat{h}_i=\hat{r}_i e^{j\theta_i}$, $i\in[B]$. The sum in \eqref{eq:proposed} can be computed by \ac{SCL} decoding up to decoding stage $|\mathcal{F}^{(\beta)}|$ with a list size $L_e=2^{|\mathcal{A}^{(\beta)}|}$. To reduce complexity at the expense of accuracy, one can approximate the calculation with \ac{SCL} decoding and $L_e$ satisfying $1\le L_e < 2^{|\mathcal{A}^{(\beta)}|}$. In fact, simulations in Sec. \ref{sec:numerical} show that small list sizes such as $L_e=8$ give \ac{FER} curves close to those of the coherent receiver.
	\begin{remark}
		The search space in \eqref{eq:proposed} grows exponentially in the number of diversity branches $B$. There are several approaches to reduce complexity and we consider only the symmetry of the likelihood function due to the channel \eqref{eq:model} and mapping \eqref{eq:label} that halves the search space. We further adopt a coarse-fine search\cite{RB74,Imad2009} as an efficient optimizer. 
	\end{remark}
	
	\begin{lemma}\label{th:1}
		Polar-coded modulations with the mapping \eqref{eq:label} and the channel \eqref{eq:model} have a sign ambiguity for the channel coefficients, i.e., for all $\vecy$, $\vech$ and $u_1^{N-1}$, we have
		\begin{equation}
		p_{ \bm{Y}|U_1^{N},\bm{H}}\left( \vecy\left|(u_1^{N-1},0),\bm{h} \right.\right) = p_{\bm{Y}|U_1^{N},\bm{H}}\left( \vecy\left|(u_1^{N-1},1),-\bm{h} \right.\right).\nonumber
		\end{equation}
	\end{lemma}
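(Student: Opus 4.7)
The plan is to exploit two algebraic symmetries in sequence: the last-row structure of the polar encoding matrix, and the symmetry of the Gray mapping $\chi$ stated in the preliminaries. I would then conclude by a one-line invariance argument for the \ac{AWGN} density.

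First, I would show that flipping $u_N$ (while keeping $u_1^{N-1}$ fixed) flips every code bit. This follows from the fact that the last row of $\mathbb{F}^{\otimes \log_2 N}$ is the all-ones vector $\boldsymbol{1}$. One can verify this by induction on $m=\log_2 N$: for $m=1$ the bottom row of $\mathbb{F}$ is $(1,1)$, and the Kronecker recursion $\mathbb{F}^{\otimes m}=\mathbb{F}\otimes\mathbb{F}^{\otimes(m-1)}$ shows that the bottom row of $\mathbb{F}^{\otimes m}$ is the concatenation of two copies of the bottom row of $\mathbb{F}^{\otimes(m-1)}$. Hence, if $c_1^N=u_1^N\,\mathbb{F}^{\otimes\log_2 N}$ and $\tilde c_1^N=(u_1^{N-1},u_N\oplus 1)\,\mathbb{F}^{\otimes\log_2 N}$, then $\tilde c_1^N=\overline{c_1^N}$.

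Second, I would apply the symmetry of the mapping \eqref{eq:label}, namely $\chi(\overline{c_1^{2m}})=-\chi(c_1^{2m})$. Denoting the transmitted vector by $\vecx=\chi\circ\boldsymbol{\Pi}(c_1^N)$ (the random interleaver preserves bit-flip symmetry), the input generated from $(u_1^{N-1},u_N\oplus 1)$ is exactly $-\vecx$. Applied blockwise, this gives $\tilde{\vecx}_i=-\vecx_i$ for every coherence block $i\in[B]$.

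Finally, I would observe that the channel model \eqref{eq:model} satisfies $h_i\vecx_i=(-h_i)(-\vecx_i)$, so for any $\vecy_i$,
\begin{equation*}
p_{\bm Y_i|\bm X_i,H_i}(\vecy_i|\vecx_i,h_i)=p_{\bm Y_i|\bm X_i,H_i}(\vecy_i|-\vecx_i,-h_i),
\end{equation*}
because the noise density depends only on $\vecy_i-h_i\vecx_i$. Multiplying over $i=1,\dots,B$ yields the claimed identity after combining with the two substitutions $(u_1^{N-1},0)\mapsto(u_1^{N-1},1)$ and $\vech\mapsto-\vech$. None of the steps is technically hard; the only point that requires mild care is verifying that the last row of $\mathbb{F}^{\otimes \log_2 N}$ is $\boldsymbol{1}$, so that the input-level bit flip $u_N\mapsto u_N\oplus 1$ induces the desired codeword-level complementation that the mapping symmetry can then convert into a sign flip of $\vecx$.
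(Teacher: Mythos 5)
Your proof is correct and follows essentially the same route as the paper's: the all-ones last row of $\mathbb{F}^{\otimes \log_2 N}$ turns the flip of $u_N$ into a complementation of $c_1^N$, the symmetry of the Gray mapping \eqref{eq:label} turns that into $\vecx\mapsto-\vecx$, and the identity $h_i\vecx_i=(-h_i)(-\vecx_i)$ makes the blockwise AWGN density invariant. The only difference is cosmetic (you argue forward from the input-bit flip while the paper starts from the channel symmetry with $\vecs=-\boldsymbol{1}$), plus your explicit induction for the last-row claim, which the paper states without proof.
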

	\begin{proof}
		For all $\vecx$, $\vecy$, $\vech$ and $\vecs\in\{-1,+1\}^B$, we have
		\begin{align}
		p\left(\vecy|\vecx,\vech\right) = \prod_{i=1}^{B} p_{\bm{Y}_i|\bm{X}_i,H_i}\left(\vecy_i|s_i\vecx_i,s_ih_i\right)\nonumber
		\end{align}
		as $s_i^2=1$. Recall that $c_1^N = \Pi^{-1}(\chi^{-1}(\vecx))$ so that $\overline{c_1^N} = \Pi^{-1}(\chi^{-1}(-\vecx))$. By choosing $\vecs = -\boldsymbol{1}$, we have
		\begin{align}\label{eq:proof2}
		p_{ \bm{Y}|\bm{C},\bm{H}}\left(\vecy|c_1^{N},\vech\right) = p_{\bm{Y}|\bm{C},\bm{H}}\left(\vecy|\overline{c_1^{N}},-\vech\right).
		\end{align}
		Let $u_1^N$ be the vector such that $c_1^N = u_1^{N}\mathbb{F}^{\otimes m}$. We have $\overline{c_1^N} = (u_1^{N-1},\overline{u_N})\mathbb{F}^{\otimes m}$ because the last row of $\mathbb{F}^{\otimes m}$ is $\boldsymbol{1}$.
	\end{proof}
	
	Lemma~\ref{th:1} implies that if a polar code is considered for \eqref{eq:model}, then the decoder cannot resolve the ambiguity on bit $u_N$. This ambiguity occurs for any binary linear block code \ed{that has a generator matrix with an all-ones row}, which is reflected in the bit $u_N$ for polar codes.
	
	\begin{theorem}\label{th:2}
		Polar-coded modulations with the mapping \eqref{eq:label} and the channel \eqref{eq:model} satisfy
		\begin{equation}\label{eq:theorem2}
		p\left( \vecy\left|{u_1^{i}},\vech \right.\right)
		=p_{ \bm{Y}|U_1^i,\bm{H}}\left( \vecy\left|{u_1^{i}},-\vech \right.\right)
		\end{equation}
		for all $\vecy$, $\vech$ and $u_1^{i}$, $i\in[N-1]$.
	\end{theorem}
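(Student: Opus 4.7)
The plan is to derive Theorem~\ref{th:2} as a direct marginalization consequence of Lemma~\ref{th:1}. The key observation is that $p(\vecy\,|\,u_1^i,\vech)$ as used in SC/SCL decoding implicitly averages over the remaining inputs $U_{i+1}^N$ as i.i.d.\ uniform bits (as stated in the paper's SC recursion), so the claimed invariance under $\vech\mapsto -\vech$ should drop out once the $u_N$ summand is paired with its sign-flipped counterpart.

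First I would write
\begin{equation*}
p\left(\vecy\,|\,u_1^i,\vech\right) = \frac{1}{2^{N-i}}\sum_{u_{i+1}^{N}} p\left(\vecy\,|\,u_1^N,\vech\right)
\end{equation*}
and split the innermost sum over $u_N\in\{0,1\}$ from the sum over $u_{i+1}^{N-1}$. For fixed $u_{i+1}^{N-1}$, Lemma~\ref{th:1} immediately gives a pairing: the $u_N=0$ term at channel $\vech$ equals the $u_N=1$ term at channel $-\vech$, and vice versa. Thus the two-term sum over $u_N$ is the same whether we evaluate it at $\vech$ or at $-\vech$; pulling this equality out through the outer sums over $u_{i+1}^{N-1}$ yields $p(\vecy\,|\,u_1^i,\vech)=p(\vecy\,|\,u_1^i,-\vech)$, which is exactly \eqref{eq:theorem2}. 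The constraint $i\le N-1$ is needed precisely so that the index $N$ is included in the marginalization — this is where the pairing trick is applied.

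The only mild subtlety is to justify that the uniform i.i.d.\ averaging over $U_{i+1}^N$ is the correct interpretation of the left-hand side; once that is in place, the argument is a one-line manipulation and no real obstacle remains. I would therefore state the marginalization identity, invoke Lemma~\ref{th:1} to rewrite one of the two $u_N$-terms, and conclude by the obvious symmetry $\vech\leftrightarrow -\vech$ of the resulting expression.
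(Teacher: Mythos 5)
Your proposal is correct and follows essentially the same route as the paper: marginalize over $U_{i+1}^N$ as i.i.d.\ uniform bits, isolate the sum over $u_N$, and use Lemma~\ref{th:1} to pair the $u_N=0$ term at $\vech$ with the $u_N=1$ term at $-\vech$ (and vice versa). The paper handles your ``mild subtlety'' the same way, invoking the law of total probability together with the mutual independence of $U_1^i$, $U_{i+1}^N$ and $\bm{H}$ and the uniformity of $U_N$.
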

	\begin{proof}
		For $i\in[N-1]$, we have
		\begin{align}
		p\left( \vecy|{u_1^{i}},\vech \right) &\numeq{\text{a}} \sum_{u_{i+1}^N} P\left( {u_{i+1}^{N}} \right) p\left( \vecy|{u_1^{N}},\vech \right)\nonumber\\
		&\numeq{\text{b}}\sum_{u_{i+1}^{N-1}} P\left( {u_{i+1}^{N-1}} \right)\left[\sum_{u_{N}}\frac{1}{2}p\left( \vecy|{u_1^{N}},\vech \right)\right] \nonumber\\
		&\numeq{\text{c}}\sum_{u_{i+1}^{N-1}} P\left( {u_{i+1}^{N-1}} \right)\left[\sum_{u_{N}}\frac{1}{2}p_{ \bm{Y}|U_1^N,\bm{H}}\left( \vecy|{u_1^{N}},-\vech \right)\right] \nonumber\\
		&\numeq{\text{d}} \sum_{u_{i+1}^N} P\left( {u_{i+1}^{N}} \right) p_{ \bm{Y}|U_1^N,\bm{H}}\left( \vecy|{u_1^{N}},-\vech \right) \nonumber
		\end{align}
		where step ($\text{a}$) follows by the law of total probability and the mutual independence of $U_1^i$, $U_{i+1}^N$ and $\bm{H}$; steps ($\text{b}$) and ($\text{d}$) follow by rearranging the sums and noting that $U_N$ is uniform; step ($\text{c}$) follows by Lemma~\ref{th:1}.
	\end{proof}
	\begin{corollary}\label{th:3}
		Polar-coded modulations with the mapping \eqref{eq:label} and the channel \eqref{eq:model} satisfy
		\begin{align}\label{eq:theorem3}
		p_{ \bm{Y}|U_{\mathcal{F}^{(\beta)}},\bm{H}}\left(\vecy\left|\boldsymbol{0},\vech\right.\right)=p_{ \bm{Y}|U_{\mathcal{F}^{(\beta)}},\bm{H}}\left(\vecy\left|\boldsymbol{0},-\vech\right.\right)
		\end{align}
		for all $\vecy$ and $\vech$.
	\end{corollary}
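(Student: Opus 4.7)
The plan is to derive Corollary~\ref{th:3} as an essentially immediate consequence of Theorem~\ref{th:2}, by marginalizing away the information bits among the first $\beta$ input positions. Concretely, I would start from the law of total probability applied to $u_{\mathcal{A}^{(\beta)}}$, writing
\begin{equation*}
p_{\bm{Y}|U_{\mathcal{F}^{(\beta)}},\bm{H}}\!\left(\vecy|\boldsymbol{0},\vech\right) = \sum_{u_{\mathcal{A}^{(\beta)}}} P\!\left(u_{\mathcal{A}^{(\beta)}}\right)\, p_{\bm{Y}|U_1^\beta,\bm{H}}\!\left(\vecy|u_1^\beta,\vech\right),
\end{equation*}
where $u_1^\beta$ is the vector whose frozen coordinates (indexed by $\mathcal{F}^{(\beta)}$) are zero and whose information coordinates (indexed by $\mathcal{A}^{(\beta)}$) range over the summation, and where I have used the mutual independence of $U_{\mathcal{A}^{(\beta)}}$, $U_{\mathcal{F}^{(\beta)}}$, and $\bm{H}$, together with uniformity of the information bits.

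Next I would invoke Theorem~\ref{th:2} with $i=\beta$, which gives the pointwise identity $p_{\bm{Y}|U_1^\beta,\bm{H}}(\vecy|u_1^\beta,\vech)=p_{\bm{Y}|U_1^\beta,\bm{H}}(\vecy|u_1^\beta,-\vech)$ for every choice of $u_1^\beta$. Substituting this into the sum and then undoing the marginalization (applying the law of total probability in reverse with the same uniform prior on $U_{\mathcal{A}^{(\beta)}}$) yields
\begin{equation*}
p_{\bm{Y}|U_{\mathcal{F}^{(\beta)}},\bm{H}}\!\left(\vecy|\boldsymbol{0},\vech\right) = p_{\bm{Y}|U_{\mathcal{F}^{(\beta)}},\bm{H}}\!\left(\vecy|\boldsymbol{0},-\vech\right),
\end{equation*}
which is the claim.

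The only real subtlety, and the step I would double-check, is the restriction $\beta\le N-1$ required to apply Theorem~\ref{th:2} legitimately; this is the regime of interest in \eqref{eq:proposed} and is consistent with the use of $\mathcal{F}^{(\beta)}$ for channel estimation. If one wished to extend the corollary to $\beta=N$, the same argument still goes through provided $N\in\mathcal{A}$ (so that $U_N$ is uniform and the symmetry of Lemma~\ref{th:1} can be applied to the innermost sum over $u_N$ exactly as in steps~(b)--(c) of the proof of Theorem~\ref{th:2}). Beyond this bookkeeping, no new ideas are needed: the corollary is essentially a ``stripping'' of Theorem~\ref{th:2} from conditioning on all of $u_1^\beta$ to conditioning on the frozen subset, obtained by averaging over the uniformly distributed information bits.
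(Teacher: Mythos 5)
Your proposal is correct and follows essentially the same route as the paper: expand the conditional via the law of total probability over the uniform information bits $u_{\mathcal{A}^{(\beta)}}$, apply Theorem~\ref{th:2} termwise with $i=\beta$, and re-collapse the sum. Your remark about needing $\beta\le N-1$ (or $N\in\mathcal{A}$) is a reasonable extra bit of care that the paper leaves implicit, since in practice $\beta$ is taken up to the last frozen index, which precedes position $N$.
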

	\begin{proof}
		We expand
		\begin{align}
		p_{ \bm{Y}|U_{\mathcal{F}^{(\beta)}},\bm{H}}\left(\vecy\left|\boldsymbol{0},\vech\right.\right)
		&\numeq{\text{a}}\sum_{u_{\mathcal{A}^{(\beta)}}}P\left( {u_{\mathcal{A}^{(\beta)}}}\right)p\left( \vecy|{u_1^{\beta}},\vech \right)\nonumber\\
		&\numeq{\text{b}}\sum_{u_{\mathcal{A}^{(\beta)}}} P\left( {u_{\mathcal{A}^{(\beta)}}}\right) p_{ \bm{Y}|U_1^{\beta},\bm{H}}\left( \vecy|{u_1^{\beta}},-\vech \right)\nonumber
		\end{align}
		where step ($\text{a}$) follows by the law of total probability and mutually independent $U_{\mathcal{A}^{(\beta)}}$, $U_{\mathcal{F}^{(\beta)}}$ and $\bm{H}$; step ($\text{b}$) follows by Theorem~\ref{th:2}.
	\end{proof}
	
	Corollary~\ref{th:3} implies that the \ac{PCT} estimator outputs two solutions for~\eqref{eq:proposed}, namely $\{\hat{\theta}_1,\dots,\hat{\theta}_B\}$ and $\{\hat{\theta}_1+\pi,\dots,\hat{\theta}_B+\pi\}$ where addition is modulo $2\pi$. An outer code can resolve this ambiguity by  optimizing over the set $[0,2\pi)^{B-1}\times[0,\pi)$ to obtain $\{\hat{\theta}_1,\dots,\hat{\theta}_B\}$ by using the inner code constraints. The demodulator then feeds the \ac{SCL} decoder with the \acp{LLR}. Let $\mathcal{L}$ be the list of words $u_{\mathcal{A}}$ output by the decoder and define
	\begin{align}
	\mathcal{L}' = \{(u_{\mathcal{A}^{(N-1)}},\overline{u_N}): u_{\mathcal{A}}\in\mathcal{L}\}.\nonumber
	\end{align}
	The outer code now eliminates invalid words in $\mathcal{L}\cup\mathcal{L}'$. Among the survivors, if any, the estimate $\hat{u}_1^N$ is chosen to maximize $p_{ \bm{Y}|U_1^N,\bm{H}}( \vecy|{u_1^{N}},\hat{\vech} )$
	if $u_{\mathcal{A}}\in\mathcal{L}$ or
	$p_{ \bm{Y}|U_1^N,\bm{H}}( \vecy|{u_1^{N}},-\hat{\vech} )$
	if $u_{\mathcal{A}}\in\mathcal{L}'$. \own{An overview is given in Algorithm~1.}
	\begin{algorithm}[t]
		\caption{Blind Decoding Algorithm}
		\textbf{Input:} the received vector $y_1^n$. \\
		\textbf{Output:} the decoded word $\hat{u}_\mathcal{A}$.
		\vspace{-3mm}
		\begin{algorithmic}[1]
			\STATE estimate $\{\hat{r}_1,\dots,\hat{r}_B\}$ via \eqref{eq:amplitude_est}
			\STATE estimate $\{\hat{\theta}_1,\dots,\hat{\theta}_B\}\in[0,2\pi)^{B-1}\times[0,\pi)$ via \eqref{eq:proposed}
			\STATE run an \ac{SCL} decoder with the \acp{LLR} obtained using $\hat{\vech}$ and output the list $\mathcal{L}$ of $u_\mathcal{A}$
			\STATE obtain $\mathcal{L}'$ by flipping the last bit of all $u_\mathcal{A}\in\mathcal{L}$
			\STATE among all $u_\mathcal{A}\in\mathcal{L}\cup\mathcal{L}'$
			that pass the outer code test, choose the most likely one as $\hat{u}_\mathcal{A}$
		\end{algorithmic}
		\label{alg:1}
	\end{algorithm}
	
	\begin{remark}
		An outer code with a minimum distance of at least two can resolve the phase ambiguity.
	\end{remark}
	
	\section{Numerical Results}
	\label{sec:numerical}
	This section provides Monte Carlo simulation results to compare the performance of \ac{PAT} and \ac{PCT}. The \ac{SNR} is expressed as $E_s/N_0$, where $E_s$ is the energy per symbol and $N_0$ is the single-sided noise power spectral density. The inner code is a $(128,38)$ polar code and the outer code is a $6$-bit CRC code with generator polynomial $x^6+x^5+1$, resulting in a $(128,32)$ code. For the \ac{QPSK} modulator \eqref{eq:label} we have $n = B n_c = 64$ channel uses and an overall rate of $R=0.5$ \ac{bpcu}. \ed{For \ac{PAT}, the $(128,32)$ code is punctured to obtain $B n_p$ pilot bits in total, resulting in a $(128-2B n_p,32)$ code.} All curves shown in the figures below are for \ac{SCL} decoding with a list size of $L=8$ after estimating the \ac{CSI}. The optimization \eqref{eq:proposed} uses a coarse-fine search with $8$ levels in both the coarse and fine search parts\cite{RB74}. The performance is compared for various estimator parameters $\beta$ and $L_e$ \ed{and to the coherent receiver with perfect CSI. No puncturing is required for the coherent receiver.} As discussed below, the gains of our scheme are similar for $B\in\{1,2\}$ and with or without fading.
	
	\subsection{Single Coherence Block ($B=1$)}
	Consider the channel~\eqref{eq:model} with $B=1$, $r_1 = 1$, and uniformly distributed phase $\Theta_1\sim\mathcal{U}[0,2\pi)$. Fig.~\ref{fig:result1} compares \ac{PAT} and \ac{PCT}. The best \ac{PAT} performance for the \acp{FER} of interest \ed{was} achieved with $n_p = 14$, \ed{i.e., 14 pilot symbols gave the lowest \ac{SNR} for \acp{FER} ranging from $10^{-2}$ to $10^{-4}$ in Fig.~\ref{fig:result1}. For smaller $n_p$ the quality of the channel estimate limits performance, and for larger $n_p$ the puncturing weakens the polar code and limits performance.}
	
	\ac{PCT} performs within $0.3$ dB of the receiver with perfect \ac{CSI} if the estimator is run with $L_e = 8$ and up to the last frozen bit with $\beta=113$. It thereby outperforms \ac{PAT} by about $1.5~\text{dB}$ at a \ac{FER} of $10^{-4}$. Observe that if the estimator is run up to the last frozen bit before the first information bit, i.e., $\beta=47$, then the performance is worse than for \ac{PAT}. The parameters $\beta=113$ and $L_e=1$ provide a good trade-off between complexity and performance when combined with a second-stage SCL decoding with a list size $L=8$.
	\begin{figure}
		\vspace{-4mm}
		\includegraphics[width=\columnwidth]{./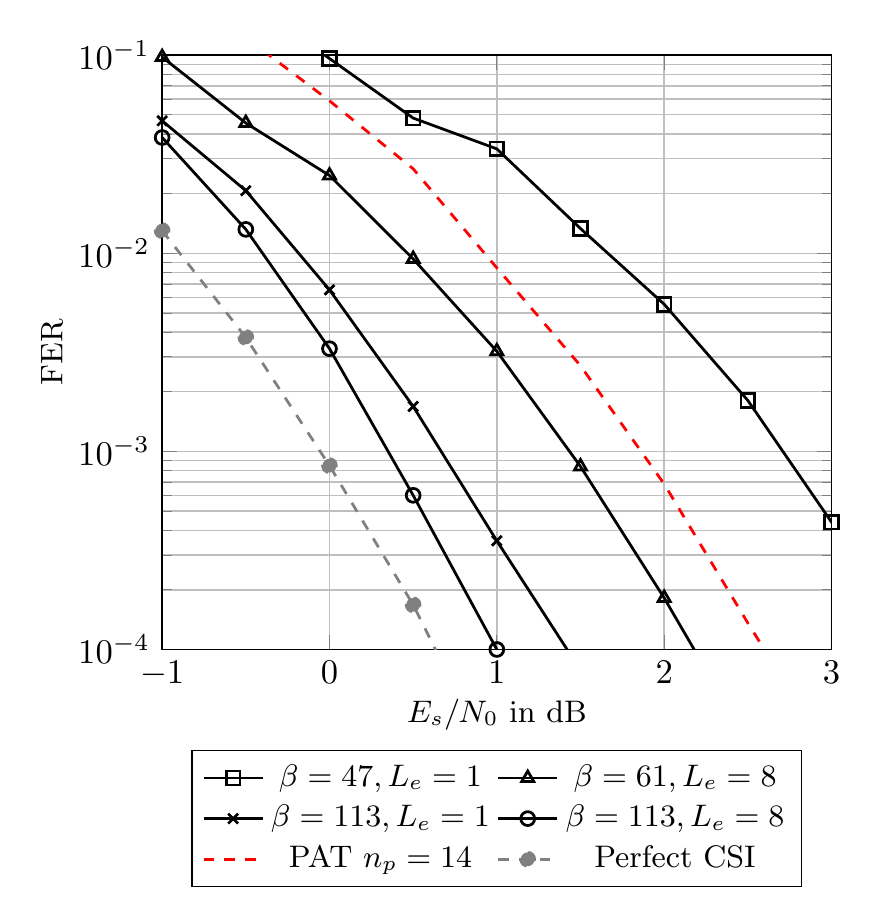}\\[-5.5mm]
		\caption{Performance of \ac{PAT} and \ac{PCT} for the channel~\eqref{eq:model} \own{with $B=1$, $r_1=1$, and $\Theta_1\sim[0,2\pi)$. A $(128,32)$ polar code was used with \ac{QPSK} so that $n = n_c =64$ and the overall rate is $R=0.5$ \ac{bpcu}}. SCL decoding uses a list size of $L=8$ for all cases.}
		\label{fig:result1}
	\end{figure}
	
	\ed{Table \ref{tab:complexity} compares the number of visited nodes per frame in the polar decoding tree along with the \ac{FER} at $E_s/N_0 = 1$~dB. Each visited node corresponds to an input bit (including the frozen bits) visited by the algorithm\cite[Remark 4]{SC-Fano19}. For PCT, we state the sum of the number of nodes visited by the estimator and the number of nodes visited by the decoder.} The number of visited nodes with \ac{PAT} and perfect \ac{CSI} is thus the same. Observe that \ac{PCT} with $\beta=113$ and $L_e=1$ visits a similar number of nodes as \ac{PAT} with a list size $L=32$ \ed{(the difference is less than $10\%$)} and it reduces the error probability by one order of magnitude.  We remark that measuring the complexity by the number of visited nodes is pessimistic for \ac{PCT} since most of the visited nodes are frozen bits. Hence, simplified \ac{SC} decoders\cite{YK11,SGV14} can significantly reduce complexity.
	\begin{table}
		\caption{Number of Visited Nodes per Frame at $E_s/N_0=1$ {d}B}
		\vspace{-2mm}
		\begin{tabular}{ccc}
			\hline\hline
			Method & FER & Visited Nodes \\
			\hline
			\ac{PAT} ($n_p=14$, $L=8$) & $8.43\times10^{-3}$ & $631$ \\
			\ac{PAT} ($n_p=14$, $L=32$) & $3.16\times10^{-3}$ & $2223$ \\
			\ac{PCT} ($\beta=47$, $L_e=1$, $L=8$) & $3.36\times10^{-2}$ & $1383$ \\
			\ac{PCT} ($\beta=61$, $L_e=8$, $L=8$) & $3.20\times10^{-3}$ & $2151$ \\
			\ac{PCT} ($\beta=113$, $L_e=1$, $L=8$) & $3.50\times10^{-4}$ & $2439$ \\
			\ac{PCT} ($\beta=113$, $L_e=8$, $L=8$) & $1.00\times10^{-4}$ & $8807$ \\
			Perfect \ac{CSI} ($L=8$) & $2.40\times10^{-5}$ & $631$ \\
			\hline\hline
		\end{tabular}
		\label{tab:complexity}
	\end{table}

	\subsection{Two Coherence Blocks ($B=2$)}
	We next consider $B=2$ coherence blocks. Fig.~\ref{fig:result2phase} shows the \ac{FER} for $r_i = 1$ and $\Theta_i\sim\mathcal{U}[0,2\pi)$, $i\in\{1,2\}$. Fig.~\ref{fig:result2rayleigh} shows the \ac{FER} for a Rayleigh block-fading channel with $H_i\sim\mathcal{CN}(0,1)$, $i\in\{1,2\}$. The best performance for \ac{PAT} \ed{was} achieved with $n_p = 7$ pilot symbols per coherence block for both cases. Observe that, in both cases, \ac{PCT} outperforms \ac{PAT} by about $2$ dB at a \ac{FER} $\approx10^{-4}$. Moreover, \ac{PCT} approaches the performance of a coherent receiver with perfect \ac{CSI}.
	\begin{figure}
		\vspace{-4mm}
		\includegraphics[width=\columnwidth]{./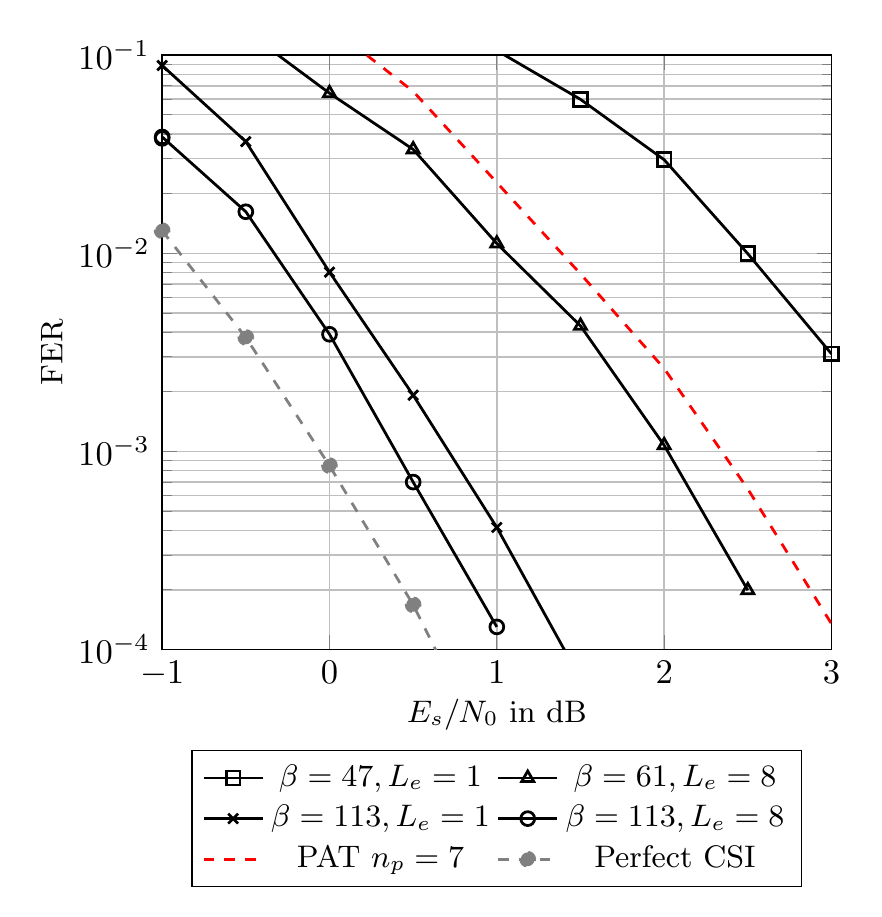}\\[-5.5mm]
		\caption{Performance of \ac{PAT} and \ac{PCT} for the channel~\eqref{eq:model} \own{with $B=2$, $r_i=1$, and $\Theta_i\sim[0,2\pi)$ for $i\in\{1,2\}$. A $(128,32)$ polar code was used with \ac{QPSK} so that $n = 2n_c = 64$ and the overall rate is $R=0.5$ \ac{bpcu}}. SCL decoding uses a list size of $L=8$ for all cases.}
		\label{fig:result2phase}
	\end{figure}
	
	Fig.~\ref{fig:result2rayleigh} also provides an upper (achievability) bound based on the random coding union bound with $s$ parameter (RCUs) \cite[Thm. 1]{martinez11-02a} and a lower (converse) bound called a metaconverse (MC) \cite[Thm. 28]{polyanskiy10-05a}. Both bounds assume that there is a power constraint per coherence block rather than a codeword. Also, the input distribution is induced by unitary space-time modulation. \own{For more details, see \cite{lancho20-saddle}.}
	\begin{figure}
		\vspace{-4mm}
		\includegraphics[width=\columnwidth]{./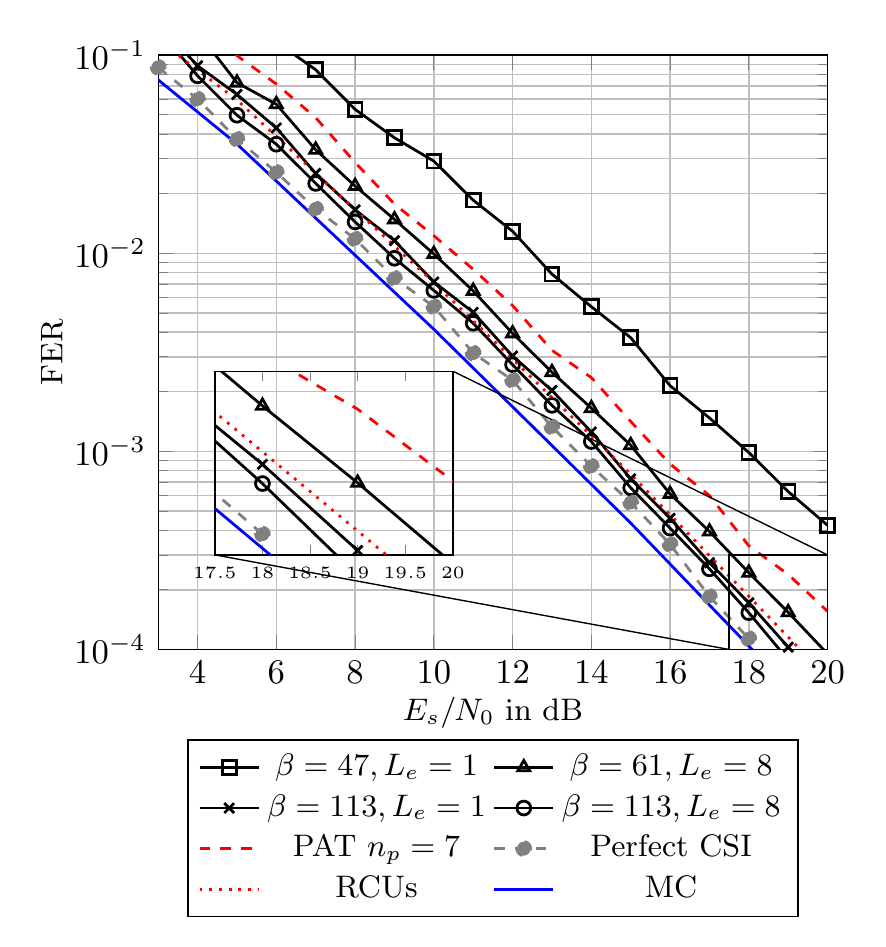}\\[-5.5mm]
		\caption{Performance of \ac{PAT} and \ac{PCT} for a Rayleigh block-fading channel \own{and $B=2$. A $(128,32)$ polar code was used with \ac{QPSK} and the overall rate is $R=0.5$ \ac{bpcu}}. SCL decoding uses a list size of $L=8$ for all cases.}
		\label{fig:result2rayleigh}
	\end{figure}
	
	\section{Conclusions}
	\label{sec:conclusions}
	A \ac{PCT} scheme was proposed that estimates \ac{CSI} via SCL decoding and the constraints imposed by the frozen bits. An outer code improves reliability and resolves phase ambiguities. Simulation results show that \ac{PCT} significantly outperforms \ac{PAT} schemes with a similar complexity and approaches the performance of a coherent receiver.
	
	\section*{Acknowledgements}
	The authors wish to thank Dr. A. Lancho (Chalmers) for providing the RCUs and MC bounds.
	
	
\end{document}